\newcommand{\beq}{\begin{equation}}
\newcommand{\eeq}{\end{equation}}
\newcommand{\bqa}{\begin{eqnarray}}
\newcommand{\eqa}{\end{eqnarray}}
\newcommand{\ket}[1]{ |{#1} \rangle}
\definecolor{maroon}{rgb}{0.7,0,0}
\definecolor{ngreen}{rgb}{0.3,0.7,0.3}
\definecolor{golden}{rgb}{0.8,0.6,0.1}
\definecolor{npurple}{rgb}{0.3,0,0.6}
\newcommand{\ketbra}[2]{|#2\rangle\langle#1|}
\newcommand*{\Tr}{\mathrm{Tr}}
\newcommand{\ba}{\begin{eqnarray}}
\newcommand{\ea}{\end{eqnarray}}
\newtheorem{definition}{Definition}
\newtheorem{thm}{Theorem}
\newtheorem{cor}{Corollary}
\newtheorem{lem}{Lemma}
\begin{document}

\title{Operational Coherent Measurements with Steering and Randomness}

\author{Chellasamy Jebarathinam}

 \affiliation{Physics Division, National Center for Theoretical Sciences, National Taiwan University, Taipei 106319, Taiwan}

 \author{Huan-Yu Ku}
\email{huan.yu@ntnu.edu.tw}
\affiliation{Department of Physics, National Taiwan Normal University, Taipei 116059,  Taiwan}

   \author{Hsi-Sheng Goan}
\email{goan@phys.ntu.edu.tw}
 \affiliation{Department of Physics and Center for Theoretical Physics,
National Taiwan University, Taipei 106319, Taiwan}
\affiliation{Center for Quantum Science and Engineering, National Taiwan University, Taipei 106319, Taiwan}
 \affiliation{Physics Division, National Center for Theoretical Sciences, National Taiwan University, Taipei 106319, Taiwan}

\begin{abstract}
Measurement incompatibility underpins randomness generation in nonlocal phenomena. However, at its root, a more fundamental quantum feature is noncommuting (or coherent) measurements. This raises a central question: How can we operationally characterize the quantum advantage of coherent measurements within nonlocal correlations? We answer this by demonstrating that coherent measurements can leverage semi-device-independent (SDI) steering, enabling local randomness generation from any set of coherent measurements. Specifically, we establish that a measurement assemblage can be used to demonstrate SDI steering if and only if it is coherent, providing a complete operational characterization. To provide an application of this operational characterization, we formulate a nonconvex resource theory for SDI steering and propose an operational monotone for the two-setting scenario by mapping noncommuting measurements to SDI steering. Our framework enables a practical quantum random number generator based on SDI steering that eliminates the need to certify entanglement and tolerates arbitrarily low detection efficiency. That is, we demonstrate that genuine randomness can be generated via coherent measurements beyond standard steerable states and even beyond entangled states under realistic experimental conditions. These results extend the scope of quantum resources for generating nonlocal correlations beyond measurement incompatibility, revealing the operational power of coherent measurements.
\end{abstract}

\maketitle 
   \date{\today}
   
\textbf{\textit{Introduction}}.---  One of the fundamental aspects of quantum mechanics that serves to distinguish between classical and quantum systems is the noncommutativity of two or more observables, i.e. the Heisenberg uncertainty principle for projective measurements~\cite{Hei27}.
Instead, quantum mechanics allows non-projective measurements, called generalized measurements, or formally POVMs.  
For a set of POVMs, the measurement incompatibility is studied as a crucial resource for nonlocal correlations~\cite{HMZ16,HsiehPRA2025}, i.e. Bell nonlocality, contextuality~\cite{CarlesPRL2022,Kieran2022PRXQ} and quantum steering~\cite{GHK+23}. 
Among these correlations, quantum steering provides an operational interpretation of measurement incompatibility in the sense that a set of POVMs is incompatible if and only if it can demonstrate quantum steering~\cite{UMG14,QVB14,UBG+15}, and even quantify measurement incompatibility~\cite{Zhao2020npj,KHC+22,HKB23,Hsieh2023arXiv,Lee2025npj}. 
Here, quantum steering refers to an asymmetric nonlocal correlation, where Alice can spatially steer Bob's state by her local measurements~\cite{Sch35,WJD07}. 

However, the profound noncommutativity of measurements, which defines coherent measurements, is a more fundamental property than the measurement incompatibility of generalized measurements. In other words, the compatibility is genuinely inequivalent to the commutativity or incoherence of the measurements. In this context, noncommuting or coherent measurements that are compatible cannot be detected under a standard steering scenario. Therefore, a precise characterization of coherent measurements via nonclassical correlations remains an open question, as not all coherent measurements can demonstrate quantum steering. This fact limits the application of coherent measurements in the generation of randomness ~\cite{PCS+15,PaulPRL2018}.

In this work, we address the challenges of characterizing coherent measurements through nonlocal correlations and the limitations of practical applications under a quantum steering scenario. We resolve these two issues by additionally assuming the dimensionality of the quantum system on the steering party's side in the given steering scenario.  This scenario is called an SDI steering task. Due to this extra assumption, classical randomness is no longer freely available within the framework, leading to SDI steering as the nonconvex resource~\cite{JD23,JDS_PRA23,JDK+25,JKC+24}.
We show that a measurement assemblage can be used to demonstrate SDI steering if and only if it is coherent. Based on this demonstration, we next obtain a nonconvex monotone of SDI steering in the two-setting scenario. Finally, we demonstrate the operational application of this monotone in bounding the optimal guessing probability in a quantum random number generator (QRNG).
Thus, the practical advantages of using SDI steering as a resource for QRNGs are clearly illustrated. This approach demonstrates that genuine randomness can be generated via the coherence of measurements, eliminating the need to certify entanglement and tolerating any low detection inefficiency.

\textbf{\textit{Preliminaries}}.--- In quantum mechanics, a measurement is generally described by a positive-operator-valued-measure, i.e. a finite set $\{M_a\}_a$ of operators $0 \le M_a \le \openone_d$, where  $\openone_d \in \mathbb{C}^{d}$ is the identity operator such that $\sum_a M_a=\openone_d$. A set of POVMs with outcome $a$ for different settings $x$ is known as a measurement assemblage $\mathcal{M}^{n_a}_{n_x}:=\{M_{a|x}\}_{a,x}$, where $n_x$ and $n_a$ denote the number of setting and outcome, respectively. Such a measurement assemblage has pairwise mutual commutativity  if and only if 
$[M_{a|x},M_{a'|x'}]=0~\forall~a,x,a',x'.$
Otherwise, it has pairwise noncommutativity. 
 In case of projective measurements, i.e. $ M_{a|x} M_{a'|x}=\delta_{aa'} M_{a|x} $ for all $x$, the aforementioned condition of noncommutativity is associated with an uncertainty relation~\cite{Hei27,Rob29}.

For a commuting measurement assemblage, all elements in the set can be simultaneously diagonalizable in a single basis $\{\ket{i}\}$, i.e. 
\begin{equation} \label{icas}
M_{a|x}= \sum^{d-1}_{i=0} \alpha_{i|(a,x)} \ketbra{i}{i},
\end{equation}
with $\alpha_{i|(a,x)}=\braket{i|M_{a|x}|i}$.
Thus, a measurement assemblage $\mathcal{M}^{n_a}_{n_x}$ is termed incoherent if it admits a description in a single orthonormal basis, i.e. when each of its elements has a diagonal matrix representation~\cite{TKK+23}. This is because all POVM constituent elements can be diagonalizable within the computational basis $\{\ket{i}\}_i$. Otherwise, it is said to have coherence. 

The other type of nonclassicality of the measurement assemblage is the measurement incompatibility, captured through nonjoint measurability~\cite{GHK+23}.
Specifically, a measurement assemblage is jointly measurable if there is a single 
POVM $\{G_\lambda\}_\lambda$ such that each element of the assemblage can be obtained as 
\begin{equation}
\label{Eq:JM_model}
M_{a|x}= \sum_\lambda p(a|x,\lambda) G_\lambda,
\end{equation}
where $p(a|x,\lambda)$ is a conditional probability. It is obvious to see that any incoherent measurement assemblage is compatible; however, the converse is not true because the coherence of the given measurement assemblage does not necessarily imply incompatibility~\cite{UMG14}. Evidently, admission of Eq.~\eqref{icas} by a measurement assemblage implies admission of Eq.~\eqref{Eq:JM_model}; nevertheless, this implication is not reversible. This follows because the set of incoherent assemblages of the given $d$-dimensional quantum system forms a nonconvex set, whereas the compatible set has convexity.

Any incompatible measurement assemblage has an operational characterization via quantum steering~\cite{QVB14,UBG+15}. This characterization is demonstrated by a steering protocol~\cite{EPR35, Sch35, WJD07} where Alice and Bob share a bipartite quantum state $\rho_{AB}$. To steer Bob's reduced state into different ensembles, Alice performs measurements on her subsystem. Through this measurement assemblage, she prepares the resulting state assemblage $\mathcal{S}^{n_a}_{n_x}:=\{\sigma_{a|x}\}_{a,x}$ for Bob.
Each element in the state assemblage is given by the set of unnormalized conditional states
\begin{equation}
\sigma_{a|x}={\rm Tr}_A{( (M_{a|x} \otimes \openone) \rho_{AB})} \hspace{0.5cm} \forall \sigma_{a|x} \in \mathcal{S}^{n_a}_{n_x},
\end{equation}
which gives the conditional probability of obtaining Alice's outcome as  
$p(a|x)=\rm Tr (\sigma_{a|x})$ and the reduced state of Bob as $\rho_{B}=\sum_a\sigma_{a|x}$. A state assemblage is unsteerable if it has a local-hidden-state (LHS) model~\cite{WJD07}, i.e.
 for all $a$, $x$, there is a decomposition of $\sigma_{a|x}$ in the form
\begin{equation}
\sigma_{a|x}=\sum_\lambda p(\lambda) p(a|x,\lambda) \rho_\lambda, 
\end{equation}
where $\lambda$ denotes the classical random variable that occurs with probability 
$p(\lambda)$; $\rho_{\lambda}$'s satisfy $\rho_\lambda\ge0$ and
${\rm Tr}(\rho_\lambda)=1$. Otherwise, it is called steerable. Steerability can be operationally identified as the certification of entanglement of the shared state in a one-sided device-independent (1SDI) way or the certification of measurement incompatibility at the untrusted side~\cite{WJD07,UBG+15}. Besides, quantum steering can be applied as a resource to 1SDI QRNGs~\cite{PCS+15,PaulPRL2018}.  

There is also a strong connection between the incompatibility of any measurement assemblage and steerability. The key notion to establish this correspondence is steering-equivalence-observables (SEO), which map a state assemblage to steering-equivalent POVMs via
\begin{equation}\label{SEO}
B_{a|x}= \rho^{-1/2}_B \sigma_{a|x} \rho^{-1/2}_B \quad \texttt{if} \quad \rho_B \quad \texttt{is full-rank}.
\end{equation}
For a non-full-rank $\rho_{B}$, an analogous expression should be written with an isometry that maps $\rho_{B}$ to a subspace where it has support. 
This connection establishes a one-to-one correspondence: a measurement assemblage is compatible if and only if it cannot demonstrate steerability, while an incompatible measurement assemblage can always demonstrate it~\cite{QVB14, UBG+15}. 

Our objective is to provide an operational characterization of the coherence of any given measurement assemblage in a nonlocal scenario.
To this end, we consider steering-inspired tasks in the one-sided semi-device-independent (1SSDI) context~\cite{JDS_PRA23,JKC+24} instead of the 1SDI context of steering tasks observing Einstein-Podolsky-Rosen steering~\cite{EPR35,Sch35,WJD07}.
The difference between these two contexts is that the Hilbert space dimension on the untrusted side is assumed in the 1SSDI context, whereas the untrusted side is fully device-independent in the 1SDI context. This dimensional assumption resembles that of the prepare-and-measure scenarios, where preparation is trusted, but measurements are not characterized to provide SDI applications~\cite{BBP+26}. For example,  these applications include SDI certification of quantum measurements~\cite{TKV+18}. In the case of steering-inspired tasks, the 1SSDI context is therefore relevant in scenarios where preparation can be trusted to demonstrate the phenomenon. 
In the present work,  we provide SDI characterization of the coherence of measurements in the 1SSDI context.  

As a consequence of the dimensional assumption of the quantum state in the 1SSDI context, the hidden variable $\lambda$ in witnessing steerability also has the same dimensional restriction. 
The SDI steering~\cite{JDS_PRA23}, which is a steering-inspired task, is defined as follows.
\begin{definition}\label{Defstrun}
 Suppose that a state assemblage $\mathcal{S}^{n_a}_{n_x}$ is produced in the given 1SSDI scenario. We say that SDI steering is demonstrated if and only if the state assemblage does not admit a decomposition of the form
\begin{equation}\label{LHSdl}
\sigma_{a|x}=\sum^{d_\lambda-1}_{\lambda=0} p(\lambda) p(a|x,\lambda) \rho_\lambda \hspace{0.5cm} \forall \sigma_{a|x} \in \mathcal{S}^{n_a}_{n_x},
\end{equation}
with $d_\lambda \le d_A$, where $d_A$ is the dimension of Alice's system, over all such decompositions.
\end{definition} 
All steerable states in the 1SDI context can obviously be used to demonstrate SDI steering. 
 The dimensional assumption in the definition of SDI steering relaxes the convexity constraint of the standard steering~\cite{JD23,JKC+24}; in other words, operationally, the dimensional assumption implies that when using SDI steering as a resource, shared classical randomness is not freely available.  Also note that for the experimental context of observing steering, classical randomness can be a resource. Classical randomness as a resource was used in~\cite{GPW05,LBL+15}, providing an operational  motivation to identify quantum resources beyond convexity for quantum technologies~\cite{KTA+24,JKC+24,SCR+25}.
The 1SSDI context for observing steering with classical randomness as a resource is analogous to the prepare-and-measure context with independent preparation and measurement devices as employed in~\cite{LBL+15} to provide an SDI QRNG.


\textbf{\textit{Operational coherence of measurements}}.--- 
Here, we demonstrate that all coherent measurements can be operationally characterized via SDI steering.
To do this, we start with the following theorem for any SEO $\mathcal{SO}^{n_a}_{n_x}:=\{B_{a|x}\}_{a,x}$ in the 1SSDI context.
\begin{thm} \label{1-to-1}
Any given SEO  $\mathcal{SO}^{n_a}_{n_x}$ is incoherent if and only if the state assemblage admits a dimensionally-restricted LHS model as in Eq.~(\ref{LHSdl}) with $d_\lambda \le d_A$. 
\end{thm}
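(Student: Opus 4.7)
For the $(\Leftarrow)$ direction, I would use the incoherence hypothesis to simultaneously diagonalize $B_{a|x}=\sum_{i=0}^{r-1}\alpha_{i|(a,x)}\proj{i}$ in a common orthonormal basis $\{\ket{i}\}_{i=0}^{r-1}$ of the $r$-dimensional support of $\rho_B$, where $r=\rank(\rho_B)$. Inverting the SEO map through $\sigma_{a|x}=\rho_B^{1/2}B_{a|x}\rho_B^{1/2}$ yields $\sigma_{a|x}=\sum_i \alpha_{i|(a,x)}\ket{\tilde i}\bra{\tilde i}$ with $\ket{\tilde i}:=\rho_B^{1/2}\ket{i}$. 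Setting $p(i):=\bra{i}\rho_B\ket{i}$, the pure state $\rho_i:=\ket{\tilde i}\bra{\tilde i}/p(i)$, and $p(a|x,i):=\alpha_{i|(a,x)}$, which is a valid conditional probability because $\sum_a B_{a|x}=\openone$ forces $\sum_a \alpha_{i|(a,x)}=1$, directly recovers Eq.~(\ref{LHSdl}) with $d_\lambda=r$. Invoking the 1SSDI dimensional bound $r\le d_A$ then delivers $d_\lambda\le d_A$.

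For the $(\Rightarrow)$ direction, I would substitute the LHS decomposition of Eq.~(\ref{LHSdl}) into the SEO map to obtain $B_{a|x}=\sum_\lambda p(a|x,\lambda)\tilde G_\lambda$ with $\tilde G_\lambda:=p(\lambda)\rho_B^{-1/2}\rho_\lambda \rho_B^{-1/2}$, which forms a POVM of $d_\lambda\le d_A$ outcomes on the $r$-dimensional support of $\rho_B$. The crux is canonicalizing the LHS to its extremal pure-state form with $d_\lambda=r$: since $\sum_\lambda p(\lambda)\rho_\lambda=\rho_B$ has rank $r$, any pure-state decomposition needs at least $r$ components, while the cap $d_\lambda\le d_A$ together with the 1SSDI bound $r\le d_A$ forces equality $d_\lambda=r$. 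Then $\{\tilde G_\lambda\}_{\lambda=0}^{r-1}$ consists of $r$ rank-1 operators summing to the identity on the $r$-dimensional support, so viewed as unnormalized vectors $\ket{v_\lambda}$ with $\sum_\lambda \ket{v_\lambda}\bra{v_\lambda}=\openone$, they must form an orthonormal basis (the matrix $V=[\ket{v_0}\cdots\ket{v_{r-1}}]$ satisfies $VV^{\dagger}=\openone$, hence is unitary). Consequently the $\tilde G_\lambda$ are pairwise commuting rank-1 projectors, and $B_{a|x}=\sum_\lambda p(a|x,\lambda)\tilde G_\lambda$ is simultaneously diagonal in that basis, establishing incoherence.

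The main obstacle I anticipate lies in the $(\Rightarrow)$ direction: justifying the canonicalization to a pure-state LHS with $d_\lambda=r$. A generic LHS may use mixed $\rho_\lambda$ with $d_\lambda<r$, and naively purifying each component can inflate the total count past $d_A$. I would resolve this by exploiting the operational picture in which the 1SSDI hidden variable encodes a classical preparation of a pure state on Alice's $d_A$-dimensional untrusted side --- so that the SDI-admissible decompositions reduce to pure $\rho_\lambda$ with $d_\lambda\le d_A$ --- and combine it with the 1SSDI bound $r\le d_A$ to guarantee the existence of the required extremal pure LHS with $d_\lambda=r$, closing the argument.
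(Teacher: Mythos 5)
Your $(\Leftarrow)$ direction is correct and is essentially the paper's own argument: diagonalize the incoherent SEO in a common basis, conjugate back by $\rho_B^{1/2}$, and read off an LHS model with hidden states $\rho_B^{1/2}\ketbra{i}{i}\rho_B^{1/2}/\bra{i}\rho_B\ket{i}$ and response functions $\alpha_{i|(a,x)}$.

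The gap is in the $(\Rightarrow)$ direction, and it sits exactly where you predicted. Your canonicalization does not go through: from ``at least $r$ pure components are needed to reach $\rho_B$ of rank $r$'' together with $d_\lambda\le d_A$ and $r\le d_A$ you conclude $d_\lambda=r$, but these inequalities only give $r\le d_\lambda\le d_A$. Whenever $r<d_A$ you may have $d_\lambda>r$ pure hidden states, in which case $\{\tilde G_\lambda\}$ is a rank-one POVM with more than $r$ outcomes on an $r$-dimensional support (e.g.\ a trine-type POVM for $r=2$, $d_\lambda=3$), whose elements are neither orthogonal nor pairwise commuting; your unitarity step ($VV^\dagger=\openone$ for a \emph{square} $V$) genuinely requires exactly $r$ rank-one elements, and $B_{a|x}=\sum_\lambda p(a|x,\lambda)\tilde G_\lambda$ is then not manifestly simultaneously diagonalizable. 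Moreover, Definition~\ref{Defstrun} admits arbitrary mixed $\rho_\lambda$, so restricting to pure hidden states is a change of the free set rather than something you can derive, and, as you note yourself, purifying mixed components inflates the count past $d_A$. For what it is worth, the paper's own proof elides the same point: it asserts that $\sum_\lambda\rho_B^{-1/2}\sigma_\lambda\rho_B^{-1/2}=\openone$ forces $\sigma_\lambda=\rho_B^{1/2}\ketbra{\lambda}{\lambda}\rho_B^{1/2}$ in the eigenbasis of $\rho_B$, which is false for a general POVM resolution of the identity. So you have correctly located the crux, and your write-up is more honest about it than the paper's, but neither your counting argument nor the appeal to an ``operational picture'' closes it; the implication is only secured in special cases such as $d_\lambda=2$ (where $\tilde G_1=\openone-\tilde G_0$ commutes with $\tilde G_0$ automatically) or when the admissible decompositions are forced to have exactly $r=\rank(\rho_B)$ rank-one components.
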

The proof of this theorem is given in App.~\ref{app:thm1}.

Next, note that any incoherent measurement assemblage can only be used to obtain an incoherent SEO for any state $\rho_{AB}$.  
In this context, the state assemblage produced on Bob's side using an incoherent measurement assemblage as in Eq.~(\ref{icas}) is given by 
\begin{equation}
\sigma_{a|x}= \sum_{i} \alpha_{i|(a,x)} \mathrm{Tr}_A \left(\left(\ketbra{i}{i} \otimes \openone \right) \rho_{AB} \right),
\end{equation}
which is an LHS model with the dimension of the hidden variable bounded by $d_\lambda \le d_A$. Thus, any incoherent measurement assemblage always leads to incoherent SEO.
On the other hand, any measurement assemblage $\mathcal{M}^{n_a}_{n_x}$ that has coherence can be used to obtain a state assemblage whose SEO has coherence.
To see this, consider any pure entangled state in the Schmidt decomposition $\ket{\phi^{(d)}_+}=\sum^{d-1}_{i=0} \lambda_i \ket{ii}.$
Any state assemblage arising from a pure entangled state is given by
\begin{equation}\label{SApent}
\sigma_{a|x}= \rho^{1/2}_B M^T_{a|x}  \rho^{1/2}_B,
\end{equation}
where  $M^T_{a|x}$ is the transpose of $M_{a|x}$.
Then, the SEO of the state assemblage that has the above decomposition is given by 
$B_{a|x}= M^T_{a|x}$,
which is incoherent if and only if the measurement assemblage is incoherent. 
Thus, as an implication of Theorem~\ref{1-to-1}, we have the following corollary.
\begin{cor} 
Suppose that the given state assemblage $\mathcal{S}^{n_a}_{n_x}$ is produced by the measurement assemblage $\mathcal{M}^{n_a}_{n_x}$. Then $\mathcal{S}^{n_a}_{n_x}$ admits a decomposition of a dimensionally-restricted LHS model, as in Eq.~(\ref{LHSdl}), for any state $\rho_{AB}$ if and only if $\mathcal{M}^{n_a}_{n_x}$ is incoherent.  
\end{cor}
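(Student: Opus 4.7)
The plan is to prove both directions of the biconditional by using Theorem~\ref{1-to-1} as the bridge between coherence of the SEO and the dimensionally-restricted LHS condition. Most of the ingredients are already displayed in the discussion preceding the corollary, so the proposal is to assemble them into a clean bidirectional argument.

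For the sufficiency direction ($\mathcal{M}^{n_a}_{n_x}$ incoherent $\Rightarrow$ dimensionally-restricted LHS for every $\rho_{AB}$), the idea is to substitute the diagonal form $M_{a|x}=\sum_{i=0}^{d-1}\alpha_{i|(a,x)}\ketbra{i}{i}$ of Eq.~(\ref{icas}) directly into $\sigma_{a|x}=\mathrm{Tr}_A((M_{a|x}\otimes\openone)\rho_{AB})$. Linearity of the partial trace yields
\begin{equation}
\sigma_{a|x}=\sum_{i=0}^{d-1}\alpha_{i|(a,x)}\,\mathrm{Tr}_A\bigl((\ketbra{i}{i}\otimes\openone)\rho_{AB}\bigr),
\end{equation}
which I would rewrite as $\sum_i p(i)\,p(a|x,i)\,\rho_i$ by setting $p(i)=\braket{i|\rho_A|i}$, $p(a|x,i)=\alpha_{i|(a,x)}$, and $\rho_i=\mathrm{Tr}_A((\ketbra{i}{i}\otimes\openone)\rho_{AB})/p(i)$ when $p(i)>0$ (and dropping the zero-weight terms otherwise). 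Since the index $i$ ranges over at most $d_A$ values, this is a valid decomposition of the form in Eq.~(\ref{LHSdl}) with $d_\lambda\le d_A$, and it manifestly works for every choice of $\rho_{AB}$.

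For the necessity direction ($\mathcal{M}^{n_a}_{n_x}$ coherent $\Rightarrow$ some $\rho_{AB}$ breaks dimensionally-restricted LHS), the plan is to exhibit a single witness state. I would take $\rho_{AB}=\ketbra{\phi^{(d)}_+}{\phi^{(d)}_+}$ with all Schmidt coefficients $\lambda_i$ strictly positive, so that $\rho_B$ is full-rank and Eq.~(\ref{SEO}) applies without the isometry fallback. Combining Eq.~(\ref{SApent}) with Eq.~(\ref{SEO}) gives the SEO $B_{a|x}=M^T_{a|x}$. Since transposition in the computational basis preserves the diagonal (and off-diagonal) pattern of each POVM element, $\mathcal{SO}^{n_a}_{n_x}$ is coherent whenever $\mathcal{M}^{n_a}_{n_x}$ is. The contrapositive of Theorem~\ref{1-to-1} then asserts that the associated state assemblage has no dimensionally-restricted LHS model, completing this direction.

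The main obstacle I anticipate is dimensional bookkeeping rather than genuine mathematical content: one must confirm that the basis $\{\ket{i}\}$ in which $\mathcal{M}^{n_a}_{n_x}$ is jointly diagonal agrees with the basis used to define incoherence of the SEO, that the $d\le d_A$ bound really qualifies the constructed model as dimensionally-restricted in the sense of Definition~\ref{Defstrun}, and that invariance of (in)coherence under transposition in the chosen basis is invoked explicitly. Once these points are checked, the corollary drops out immediately from Theorem~\ref{1-to-1} together with the SEO identity $B_{a|x}=M^T_{a|x}$ for the maximally Schmidt-rank pure entangled witness state.
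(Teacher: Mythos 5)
Your proposal is correct and takes essentially the same route as the paper: the forward direction substitutes the diagonal POVM form into the partial trace to obtain a dimensionally-restricted LHS model valid for every $\rho_{AB}$, and the converse uses the full-Schmidt-rank pure entangled state to obtain $B_{a|x}=M^{T}_{a|x}$ and then invokes Theorem~1. The extra bookkeeping you flag (normalizing $\rho_i$ by $p(i)$ and checking that transposition in the chosen basis preserves (in)coherence) is sound and merely makes explicit what the paper leaves implicit.
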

In the above corollary, we have established that the coherence of any measurements can be detected by the coherence of SEO in the 1SSDI scenario. Thus,
the coherence of any measurements has SDI characterization through detection via SDI steering. To provide an application of this operational coherence of measurements to randomness, we are now going to establish a quantitative SDI steering based on the SEO's property in the following.

\textbf{\textit{Quantification of SDI steerability}}.---
As another implication of Theorem~\ref{1-to-1}, we have the following tight criterion for SDI steering.
\begin{cor} \label{SEONC}
SDI steering is demonstrated if and only if SEO $\mathcal{SO}^{n_a}_{n_x}$ has pairwise noncommutativity.
\end{cor}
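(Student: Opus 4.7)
The plan is to chain two equivalences using Theorem~\ref{1-to-1} as the bridge. By Definition~\ref{Defstrun}, SDI steering is demonstrated precisely when $\mathcal{S}^{n_a}_{n_x}$ fails to admit a dimensionally-restricted LHS decomposition as in Eq.~(\ref{LHSdl}), and Theorem~\ref{1-to-1} identifies this failure with coherence of the SEO $\mathcal{SO}^{n_a}_{n_x}$. It therefore suffices to establish the purely linear-algebraic equivalence between coherence of an SEO and its pairwise noncommutativity.

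For that step I would invoke the spectral theorem for commuting Hermitian operators. Each SEO element $B_{a|x}$ is Hermitian by construction in Eq.~(\ref{SEO}) (or its isometric analogue on the support of $\rho_B$ when $\rho_B$ is not full-rank). The spectral theorem then guarantees that a finite family of Hermitian operators is simultaneously diagonalizable in a common orthonormal basis if and only if every pair commutes. Simultaneous diagonalizability in a single basis $\{\ket{i}\}$ is precisely the form of Eq.~(\ref{icas}), with coefficients $\alpha_{i|(a,x)} = \braket{i|B_{a|x}|i}$, which is how the paper defines an incoherent assemblage. Contrapositively, violation of Eq.~(\ref{commtpairw}) for some pair---the definition of pairwise noncommutativity---is equivalent to coherence of the SEO.

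Combining the two steps yields the corollary: SDI steering demonstrated $\Leftrightarrow$ no dimensionally-restricted LHS model $\Leftrightarrow$ $\mathcal{SO}^{n_a}_{n_x}$ coherent $\Leftrightarrow$ $\mathcal{SO}^{n_a}_{n_x}$ has pairwise noncommutativity. Only the first equivalence uses any of the nontrivial content of the \paper{}; the remaining link is a one-line application of the spectral theorem.

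The main subtlety I anticipate is ensuring that the paper's notion of incoherence is the basis-\emph{independent} one, namely ``admits some single orthonormal basis in which all elements are diagonal,'' rather than diagonality with respect to a pre-specified computational basis---only the former matches pairwise commutativity. This is the natural reading for SEOs, which carry no preferred basis. The non-full-rank case poses no extra difficulty because the SEO, and hence the commutator in Eq.~(\ref{commtpairw}), is evaluated on $\mathrm{supp}(\rho_B)$ via the isometry used to define Eq.~(\ref{SEO}), and the spectral theorem applies equally well on that subspace.
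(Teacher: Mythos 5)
Your proof is correct and follows essentially the same route as the paper's: the paper's own (very terse) argument likewise combines Theorem~\ref{1-to-1} with the equivalence between pairwise commutativity of the Hermitian SEO elements and their simultaneous diagonalizability, i.e., incoherence. Your explicit remarks on the basis-independent reading of incoherence and on the non-full-rank case are details the paper leaves implicit, but they do not change the approach.
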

\begin{proof}
A state assemblage admits a decomposition of the dimensionally-restricted LHS model with $d_\lambda \le d_A$  if and only if the SEO commute, i.e. $[B_{a|x}, B_{a'|x'}]=0$ for all $a,x,a',x'$. On the other hand, any SEO with noncommutativity implies that the state assemblage does not have the above-mentioned decomposition. 
\end{proof}

Based on the property of SEO in the 1SSDI context stated in Cor.~\ref{SEONC}, we now define a resource-theoretic quantification of the phenomenon. Before that, we need to formulate the resource theory of steerability in the 1SSDI context. In this resource theory, the free resources are all state assemblages that have the dimensionally-restricted LHS model with $d_\lambda \le d_A$; while the standard 1SDI does not admit this restriction~\cite{GallegoPRX2015,KuPRA2018,KuPRXQ2022}. On the other hand, in the following theorem, we identify the free operations.
\begin{thm}\label{fo}
The free operations of the resource theory of SDI steering are any local operations without shared randomness of the form given by
\begin{equation}\label{FOcSEO}
\sigma'_{a'|x'}=\sum_{a,x,\mu} p(\mu) p(a'|a,x',\mu) p(x|x',\mu)
 \mathcal E(\sigma_{a|x}),
\end{equation}
where the new state assemblage $\{\sigma'_{a'|x'}\}_{a',x'}$ is obtained from the original state assemblage $\{\sigma_{a|x}\}_{a,x}$, $p(\mu)$, $p(a'|a,x',\mu)$ and $p(x|x',\mu)$ are probabilities and  $\mathcal E$:  $\mathbb{C}^{d} \rightarrow \mathbb{C}^{d}$, is the completely positive trace-preserving  (CPTP) map. 
\end{thm}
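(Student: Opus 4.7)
The plan is to verify that the wiring in Eq.~(\ref{FOcSEO}) defines a bona fide free operation in two steps: (i) a closure check, showing that whenever the input assemblage $\{\sigma_{a|x}\}$ admits a dimensionally-restricted LHS decomposition with $d_\lambda \le d_A$ as in Eq.~(\ref{LHSdl}), the output $\{\sigma'_{a'|x'}\}$ admits such a decomposition with the same dimensional bound; and (ii) a structural argument showing that Eq.~(\ref{FOcSEO}) in fact exhausts the local operations available to Alice and Bob when shared classical randomness between the two sides is forbidden. Part (i) is the central technical check and should occupy most of the proof; part (ii) amounts to an enumeration of the local primitives consistent with the 1SSDI framework and the nonconvex character of the resource theory.

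\textbf{Closure direction.} I would begin by substituting $\sigma_{a|x} = \sum_{\lambda=0}^{d_\lambda-1} p(\lambda) p(a|x,\lambda)\rho_\lambda$ into the right-hand side of Eq.~(\ref{FOcSEO}) and interchanging the sums, obtaining an expression of the form $\sigma'_{a'|x'} = \sum_\lambda p(\lambda)\, p(a'|x',\lambda)\, \mathcal{E}(\rho_\lambda)$, where the effective response function is $p(a'|x',\lambda) := \sum_{a,x,\mu} p(\mu) p(a'|a,x',\mu) p(x|x',\mu) p(a|x,\lambda)$. A short marginalization over $a'$ then confirms that $p(a'|x',\lambda)$ is a valid conditional probability, and each $\mathcal{E}(\rho_\lambda)$ is a valid quantum state because $\mathcal{E}$ is CPTP. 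The crucial observation is that the local randomness $\mu$ is summed \emph{inside} the $\lambda$-sum, so it is absorbed into the response function without enlarging the hidden-variable alphabet; the output therefore retains the bound $d_\lambda \le d_A$ and lies in the free set.

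\textbf{Main obstacle and converse.} The subtle point, and where I expect the main conceptual difficulty to lie, is the clause ``without shared randomness'' in the statement. One has to argue that if the local coin $\mu$ were instead correlated with the hidden variable $\lambda$ labelling Bob's preparations, then its marginal would effectively enlarge the hidden-variable alphabet and could manufacture SDI steerability from free inputs, in direct conflict with the nonconvex character of the theory emphasized above in Cor.~\ref{SEONC}. Keeping $\mu$ confined to Alice's local wiring is precisely what prevents this inflation and guarantees closure. The converse claim --- that every local operation compatible with the 1SSDI framework and with no shared randomness is of the form in Eq.~(\ref{FOcSEO}) --- then follows by enumerating the only primitives available: a CPTP map $\mathcal{E}$ on Bob's trusted side, together with classical pre-processing $p(x|x',\mu)$ of the setting and post-processing $p(a'|a,x',\mu)$ of the outcome on Alice's uncharacterized side, coordinated by at most a local random variable $\mu$.
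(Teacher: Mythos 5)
Your proposal is correct and follows essentially the same route as the paper: the core of both is the closure check obtained by substituting the dimensionally-restricted LHS decomposition into Eq.~(\ref{FOcSEO}), absorbing the local randomness $\mu$ into an effective response function $p(a'|x',\lambda)$, and noting that the hidden-variable alphabet (hence the bound $d_\lambda\le d_A$) is unchanged while $\rho_\lambda\mapsto\mathcal{E}(\rho_\lambda)$ stays a valid state. Your observation that shared randomness must be excluded because it would inflate the hidden-variable alphabet is exactly the content of the paper's Lemma~\ref{nFO}, which the paper makes explicit by composing an LOSR coin $\lambda'$ with the model's $\lambda$ to get $d_{\lambda''}\le d_\lambda d_{\lambda'}>d_A$.
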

The proof of this theorem is given in App.~\ref{app:thm2}. 
 Within the framework of local operations without shared randomness as stated in Theorem~\ref{fo}, the term \textit{nonlocal correlations} is used in the current work. In this broader framework,  nonlocal correlations include both SDI steerable and standard steerable correlations.
Now, we construct a proper quantification of SDI steerability in the context of the above resource theory. In a two-setting 1SSDI scenario, a measure based on the amount of SEO's noncommutativity can be captured through a Schatten $p$-norm $||A||_p$ of an operator $A$. Let $\sigma_j$ denote the singular values of $A$. $||A||_p$ is defined as
\begin{align*}
    \| A \|_p :=
    \begin{cases}
      \big( \sum_j \sigma_j^p \big)^\frac{1}{p}, \; &\text{for} \; p \in [1,\infty[,\\
      \max_j \, \sigma_j, \; &\text{for} \; p=\infty,
    \end{cases}
\end{align*}
where the summation and the maximum go over all the singular values of $A$ (including multiplicities).
For any given measurement assemblage $\mathcal{M}^{n_a}_{2}$, the amount of noncommutativity  can be defined as follows: 
\begin{equation} \label{SEOcomm}
\Upsilon_p(M_{a|x})=\sum_{a,a'}||[M_{a|0},M_{a'|1}]||_p.
\end{equation} 
The above measure has been studied in~\cite{MK22}.
For any measurement assemblage $\mathcal{M}^{n_a}_{2}$ in $\mathbb{C}^{d}$, the measure of noncommutativity satisfies $0 \le \Upsilon_p \le 2^{\frac{1}{p}} d \sqrt{d-1}$. Note that $\Upsilon_p$ is concave.

 Using the measure of noncommutativity of  SEO as in Eq.~(\ref{SEOcomm}), we then define the quantification of SDI steerability $S_{\Upsilon_p}$ as follows. 
\begin{definition} \label{QSDIS}
For any state assemblage $\mathcal{S}^{n_a}_{2}$ produced in the given two-setting  1SSDI scenario, the corresponding SEO $\mathcal{SO}^{n_a}_{2}$ can be constructed. Then, we define the measure of SDI steerability $S_{\Upsilon_p}(\sigma_{a|x})$ as
\begin{equation} \label{QSDISt}
    S_{\Upsilon_p}(\sigma_{a|x})=\frac{\Upsilon_p(B_{a|x})}{2^{\frac{1}{p}} d \sqrt{d-1}},
\end{equation}
which satisfies  $0\le S_{\Upsilon_p}(\sigma_{a|x})\le 1$.
\end{definition}
The above quantification is a faithful, nonconvex, and monotonic measure of SDI steering. That is, the measure in Eq.~(\ref{QSDISt}) satisfies the following axiomatic properties  in resource theory~\cite{ChitambarRMP2019} for any $p$. We now formally state the properties of our measure and prove them.
\begin{enumerate}
    \item Faithfulness: $ S_{\Upsilon_p}(\sigma_{a|x}) = 0$ if and only if the state assemblage is a free resource.
\item Nonconvexity: for any state assemblages $\mathcal{S}^{n_a}_{2}$ and $\mathcal{S'}^{n_a}_{2}$,  $S_{\Upsilon_p}(q\sigma_{a|x}+(1-q)\sigma'_{a|x}) \ge  q S_{\Upsilon_p}(\sigma_{a|x})+(1-q)S_{\Upsilon_p}(\sigma'_{a|x})$ for $0 \le q \le 1$. 

\item Monotonicity: Let $\mathcal{F}$ denote a free operation as in Eq.~(\ref{FOcSEO}). 
$ S_{\Upsilon_p}(\mathcal{F}[\sigma_{a|x}]) \le  S_{\Upsilon_p}(\sigma_{a|x})$.
\end{enumerate}
The first property holds because $S_{\Upsilon_p}(\sigma_{a|x}) = 0$ if and only if the state assemblage is a free resource in the 1SSDI context, due to Cor.~\ref{SEONC}. The second property is valid due to the concavity property of the noncommutativity measure~\cite{MK22}.  The proof of the third property is given in App.~\ref{monotone}. In addition, in App.~\ref{illu}, using our measure $ S_{\Upsilon_p}(\sigma_{a|x})$, the quantification of SDI steering of specific two-qudit states and the detection of SDI steering in the presence of any low detection efficiency have been illustrated. 
Next, we aim to relate our measure of SDI steering  to the figure of merit of a QRNG for which the coherence of the measurements is the resource.

\textbf{\textit{Characterization of SDI randomness}}.---  Quantum randomness arises through the coherence complementary relation between state and measurement. For example, measuring a single qubit state $\ket{\psi}=\frac{\ket{0}+\ket{1}}{2}$ in the basis $\{\ket{0},\ket{1}\}$ breaks the coherence in the state and provides completely random outcomes. This randomness is fully unpredictable. The intrinsic randomness of quantum systems is exploited as a resource in QRNGs~\cite{MYC+16,HGJ17,MVP23} and quantum cryptography~\cite{PR22}. Obviously, QRNG based on quantumness of single systems requires a trust in both the state and the measurement so that the coherence is exploited~\cite{MYC+16}. On the other hand, entanglement-based QRNGs can be used to provide certifiable randomness even if devices are not trusted, such as 1SDI randomness certification~\cite{PCS+15}. 
In this context, intrinsic randomness is contained in $\{p(a|x)\}_{a,x}$.  
This also holds in the presence of an eavesdropper, Eve, who can hold a purification $\ket{\psi_{ABE}}$ of the state shared by Alice and Bob $\rho_{AB}$, with the dimension of Eve's system being arbitrary~\cite{PCS+15}. However, this randomness in a nonlocal scenario is based on measurement incompatibility rather than the coherence of measurement, cf. a single system randomness scenario~\cite{VMT+14}. 

Now we close this fundamental gap in the 1SSDI context, such that the randomness can be generated in a nonlocal scenario with coherent measurements, namely:
\begin{thm}\label{IRandom}
 Suppose $\{p(a|x)\}_{a,x}$ is observed in the given 1SSDI scenario with SEO having noncommutativity. In that case, intrinsic randomness is contained in $\{p(a|x)\}_{a,x}$ even in the presence of an adversary, Eve, but whose system is also dimensionally restricted as Alice's.
 \end{thm}
 The proof of Theorem~\ref{IRandom} is given in App.~\ref{app:thm3}. Several physical insights follow from the above theorem. First, the intrinsic randomness of $\{p(a|x)\}_{a,x}$ is certified by SDI steering, provided that Eve's purification, $\ket{\psi_{ABE}}$, has the limitation that her dimension is limited to that of Alice. Second, the above Theorem shows that coherence of measurements can be applied to randomness generation in the nonlocal scenario.

\begin{figure}
	\begin{subfigure}
		\centering\includegraphics[width=8cm]{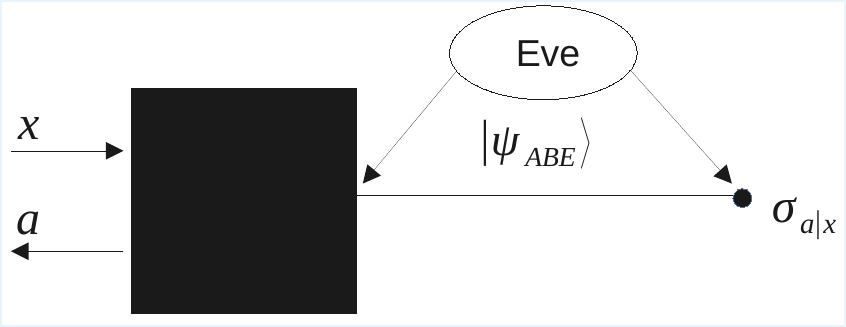}
		\caption{Setup for randomness certification in a 1SDI scenario, where Alice treats her measurement device as a black box with inputs and outputs, labeled by $x$  and $a$, respectively, and Eve's role can be described by having a purification $\ket{\psi_{ABE}}$ of the state $\rho_{AB}$ shared by Alice and Bob~\cite{PCS+15}. Here, the local dimensions $d_A$ and $d_E$ of Alice's and Eve's systems, respectively, are arbitrary, while the local dimension $d_B$ of Bob's system is trusted.    
        \label{Fig:QRNG_1sDI}}
	\end{subfigure}
	\begin{subfigure}
		\centering\includegraphics[width=8cm]{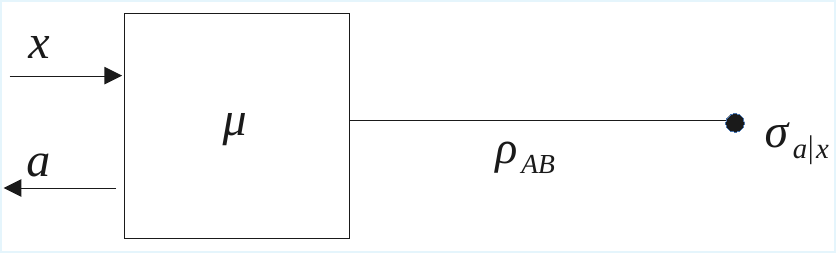}
		\caption{Setup for randomness certification in the 1SSDI scenario, where only the dimension of Alice is trusted, but she performs uncharacterized measurements. Here, the fully trusted party Bob prepares the bipartite state $\rho_{AB}$ and sends one half of it to Alice. Eve's role can be described by local randomness $\mu$, which accounts for noise acting on Alice's side.  \label{Fig:QRNG}}
	\end{subfigure}
	\end{figure}
In Fig.~\ref{Fig:QRNG_1sDI}, the setup for randomness certification in a 1SDI scenario is depicted, while the setup for randomness certification in the 1SSDI scenario is depicted in Fig.~\ref{Fig:QRNG}.
In the operational context of certifying randomness,  the difference between the two scenarios is that in any 1SDI scenario, Eve, who is not dimensionally restricted, can prepare the bipartite quantum state shared by Alice and Bob, whereas in any 1SSDI scenario, the trusted party, Bob, prepares the bipartite quantum state to be shared with Alice. In the case of 1SSDI scenario, Eve's presence can be modeled as noise which is described by an internal source of randomness acting on Alice's device, as in the prepare-and-measure context of~\cite{LBL+15}.

Now, we show that the relationship between SDI QRNG and SDI steerability is even tighter.
Specifically, we demonstrate the quantitative relation between the amount of intrinsic randomness in $\{p(a|x)\}_a$ for the given $x$ with two dichotomic measurements and the measure of SDI steerability, $S_{\Upsilon_p}$, given by Definition~\ref{QSDIS}.  We restrict ourselves to $d_A=2$ in the following because the required mathematical tools are not available for higher dimensional cases to establish the aforementioned tighter relationship.

Using SDI steerability as a resource for randomness generation at the untrusted side, Eve cannot use shared randomness between Alice and Bob, and her purification has the limitation stated in Theorem~\ref{IRandom}. Thus, Eve's role can be modeled as a source of internal randomness $\mu$ at the untrusted side~\cite{LBL+15,JD23}. Then, $\{p(a|x)\}_{a,x}$ observed in the 1SSDI scenario is given by 
\begin{align}
p(a|x)&=\int d\mu r(\mu) p(a|x,\mu)
      =\Tr(\rho_{AB} (M_{a|x} \otimes \openone_2) ),
\end{align}
where $\mu$'s occur with probability distribution $r(\mu)$ and $\rho_{AB}$ is the two-qubit state shared by Alice and Bob. Here, the internal randomness describing the noise at the level of measurements is as follows: 
\begin{equation}
M_{a|x}=\int d\mu r(\mu) M^{(\mu)}_{a|x}= \frac{1}{2}\left(\openone_2 + (-1)^a\vec{T}_x \cdot \vec{\sigma} \right),
\end{equation}
where $\vec{T}_x$ is the Bloch vector of Alice's measurement and $\vec{\sigma}$ is the vector of Pauli matrices. Similarly, at the level of the state assemblage, we have
\begin{equation}
\sigma_{a|x}=\int d\mu r(\mu) \sigma_{a|x}^{(\mu)}.
\end{equation}

The figure of merit of the random number generation protocol is 
the best guessing probability, i.e. $p_{g}=\max_a p(a|x)$,  for the given $x$ and the state assemblage produced at Bob's side. 
Given $x$ and the knowledge of the internal state $\mu$ that occurs with probability $q_\mu$, the best guess for $a$ is given by $\max_a p(a|x, \mu)$. 
The upper bound on the figure of merit is then given as 
\begin{align}
p_{g}&=\sum_{\mu} q_\mu \max_a  p(a|x,\mu) 
        \le  \frac{1}{2} \left(1+\sqrt{1-S^2_{\Upsilon_p}}\right).
\label{guessub}
\end{align}
The proof of the above inequality is given in App.~\ref{proofSTQRNG}.

Inequality~(\ref{guessub}) implies that maximal local randomness, i.e. $H_{\min}:=-\log_2 p_g=1$ is certified using any pure entangled state. This follows because $S_{\Upsilon_p}=1$ can be achieved by any pure entangled state (see Eq.~(\ref{QSck}) in App.~\ref{illu}). For the two-qubit isotropic states, $\rho^{(2)}_{iso}=\alpha \ketbra{\phi^{(2)}_+}{\phi^{(2)}_+} +\frac{1-\alpha}{4} \openone_4$, the certified randomness using SDI steerability of the state is given by $H_{\min}=-\log_2 \left(\frac{1}{2} \left(1+\sqrt{1-\alpha^2}\right)\right)$, which is nonzero for any $|\alpha|>0$. Thus, SDI local randomness can also be achieved by an entangled state that is unsteerable in the standard steering context or by a separable state.  Such SDI local randomness that does not require using standard steering or entanglement  can also be obtained  using prepare-and-measure correlations, as demonstrated in~\cite{LBL+15}. Our framework demonstrates that the quantum communication advantage of the protocol in~\cite{LBL+15}  can also be achieved by our operational coherence of measurements via SDI steering. 

Finally, we compare the framework of the certification of randomness studied by us  with that of the standard steering. First, in~\cite{PaulPRL2018}, it was demonstrated that maximal randomness can be witnessed through the optimal violation of the steering inequality using a two-qudit system. Otherwise, they use a semi-definite program to estimate the randomness in a noisy scenario. Here, we demonstrate an analytical relation directly through the quantification of SDI steering, as in Eq.~\eqref{guessub}. In addition, incompatible measurements and detection efficiency higher than a certain threshold value are required for randomness generation in the standard steering scenario~\cite{PCS+15}. On the other hand, SDI local randomness indicated by a nonzero $S_{\Upsilon_p}$ can also be achieved by the isotropic two-qubit states for noisy coherent  mutually unbiased bases, which are not necessarily incompatible. In App.~\ref{illu}, we have shown that in the presence of any low detection efficiency, SDI local randomness can still be achieved by the isotropic two-qubit state for any $|\alpha|>0$.

\textbf{\textit{Conclusions and Discussions}}:--- We have established that coherent measurements can be operationally characterized through SDI steering, proving that a measurement assemblage demonstrates SDI steering if and only if it is coherent. This fundamentally differs from standard steering, as coherent but compatible measurements, which cannot be used to demonstrate standard steering, can nevertheless be used to demonstrate SDI steering. By formulating a nonconvex resource theory and introducing an operational monotone, we provided quantification of SDI steerability and applied it to design a QRNG that eliminates the need to certify entanglement and tolerate arbitrarily low detection efficiency. These results reveal that coherent measurements, the fundamental quantum feature underlying the Heisenberg uncertainty principle, serve as an operational resource for quantum information tasks, opening new avenues for SDI protocols under realistic experimental conditions. 

Finally, some questions remain open. 
To demonstrate the quantum advantage of SDI steering in general, it would be interesting to study a unified measure of SDI steering for any number of settings with any number of outcomes. 
We have restricted ourselves to the qubit case in designing our SDI QRNG. Thus, it would be relevant to develop the required mathematical tools and the semi-definite-programming method to extend our SDI QRNG to any higher dimension.
Reference~\cite{CMS+25} characterized intrinsic randomness from generalized measurements on single-qubit states, relevant to QRNG setups similar to ours. Connecting this characterization to our QRNG framework would be an interesting direction. 
 Since quantum randomness is a resource for secure key distribution, it would be highly relevant to connect our framework to achieve secure key distribution protocols.

\textbf{\textit{Acknowledgements}}:---This work was supported by the National Science and Technology Council (NSTC), the Ministry of Education (MOE) through the Higher Education Sprout Project NTU-113L104022-1, and the National Center for Theoretical Sciences (NCTS) of Taiwan.
 H.- Y. K. is supported by National Science and Technology Council (NSTC), (with grant number NSTC 112-2112-M-003- 020-MY3), and the MOE through Higher Education Sprout Project of National Taiwan Normal University.
 H.-S.G. acknowledges support from NSTC under Grants No. NSTC 113-2112-M-002-022-MY3, No. NSTC 113-2119-M-002 -021, and No. NSTC 114-2119-M-002-017-MY3, from the US Air Force Office of Scientific Research under Award Number FA2386-23-1-4052 and the support of Taiwan Semiconductor Research Institute (TSRI) through the Joint Developed Project (JDP), and the support from the Physics Division, NCTS. 
H.-S.G. acknowledges support from the research project ``Pioneering Research in Forefront Quantum Computing, Learning and Engineering'' of National Taiwan University (NTU) under Grant No. NTU-CC- 113L891604  No.~NTU-CC-114L8950, and No.~NTU-CC-114L895004, as well as the support from the “Center for Advanced Computing and Imaging in Biomedicine (NTU-113L900702 and NTU-114L900702)” through the Featured Areas Research Center Program within the framework of the MOE Higher Education Sprout Project.

\appendix

\section{Proof of Theorem~\ref{1-to-1}} \label{app:thm1}
Suppose that any given SEO $\mathcal{SO}^{n_a}_{n_x}$ is incoherent. Then there exists an orthonormal basis $\{\ket{i}_{B}\}$ on Bob's side and coefficients $0\le \beta_{i|(a|x)}\le 1$ such that 
\begin{equation}\label{icSEO}
B_{a|x}= \sum_{i} \beta_{i|(a|x)} \ketbra{i}{i},
\end{equation}
with $\beta_{i|(a|x)}=\braket{i|B_{a|x}|i}$. It then follows that the state assemblage does not imply SDI steering because it has the dimensionally-restricted LHS model, as mentioned in Definition~\ref{Defstrun}. To show this, we assume that Bob's reduced state is full rank. After plugging Eq. (\ref{icSEO}) into the definition of SEO in Eq.~(\ref{SEO}), we have
\begin{equation}
\sum_{i} \beta_{i|(a|x)} \ketbra{i}{i}=\rho^{-1/2}_B \sigma_{a|x} \rho^{-1/2}_B.
\end{equation}
From the above relation, it follows that the state assemblage admits the decomposition given by
\begin{equation}
\sigma_{a|x}=\sum_{i} \beta_{i|(a|x)}  \rho^{1/2}_B\ketbra{i}{i}  \rho^{1/2}_B.
\end{equation}
On the right-hand side of the above equation, we can rewrite the decomposition as follows 
\begin{equation}\label{strLHSic}
\sigma_{a|x}=\sum_{i} \beta_{i|(a|x)}  \sigma_i,
\end{equation}
by introducing the unnormalized states $\sigma_i=\rho^{1/2}_B\ketbra{i}{i}  \rho^{1/2}_B$.
Obviously, this is an LHS model with $i$ denoting a local hidden variable. This implies that the condition for having the dimensionally-restricted LHS model with $d_\lambda \le d_A$ is satisfied.

To prove the converse, consider any state assemblage that has the dimensionally-restricted LHS model given by 
\begin{equation}\label{MDILHS1to1}
\sigma_{a|x}=\sum^{d_\lambda-1}_{\lambda=0} p(a|x,\lambda) \sigma_\lambda, 
\end{equation}
with $d_\lambda \le d_A$ and $\sum_a p(a|x,\lambda)=1$. 
Note that in any such LHS model, we must have
\begin{equation}
\sum_\lambda \rho^{-1/2}_B \sigma_\lambda \rho^{-1/2}_B=\openone , \quad \forall \sigma_\lambda,
\end{equation}
which implies that 
\begin{equation}\label{slorb}
\sigma_\lambda=\rho^{1/2}_B\ketbra{\lambda}{\lambda}  \rho^{1/2}_B, 
\end{equation}
for the orthonormal basis $\{\ket{\lambda}\}$ that diagonalizes $\rho_B$. This can be seen as follows. Inserting $\sigma_\lambda$ given by Eq. (\ref{slorb}) in Eq. (\ref{MDILHS1to1}) and summing over $a$, we obtain 
\begin{align}
\sum_a\sigma_{a|x}&=\sum_a\sum^{d_\lambda-1}_{\lambda=0} p(a|x,\lambda) \rho^{1/2}_B\ketbra{\lambda}{\lambda}  \rho^{1/2}_B, \nonumber \\
&=\sum^{d_\lambda-1}_{\lambda=0}  \rho^{1/2}_B\ketbra{\lambda}{\lambda}  \rho^{1/2}_B  =\rho_B.
\end{align}
Now, for any state assemblage that has an LHS model as in Eq. (\ref{MDILHS1to1}), it is readily seen that the SEO has the following form,
\begin{equation}
B_{a|x}=\sum_\lambda p(a|x,\lambda) \ketbra{\lambda}{\lambda}, 
\end{equation}
which is incoherent.
This implies that any dimensionally-restricted LHS model with $d_\lambda \le d_A$ leads to an SEO that is incoherent, as in Eq.~(\ref{icSEO}). 

\section{Proof of Theorem~\ref{fo}}  \label{app:thm2}

To identify the free operations of the resource theory of SDI steering, we first recall that a resource-theoretic framework of steering was studied to characterize the nonclassicality of steerable assemblages in the 1SDI context~\cite{ZSHS23}. In this framework, the key element is the precise characterization of free operations in terms of local operations and shared randomness (LOSR). Under an LOSR transformation, a given state assemblage $\mathcal S^{n_a}_{n_x}$ transforms into another assemblage $\mathcal S'^{n_a}_{n_x}$, with the elements given by
\begin{align} \label{FOinSEO}
\sigma'_{a'|x'}=\sum_{\lambda,a,x} p(\lambda) p(a'|a,x',\lambda) p(x|x',\lambda)
p(a|x) \mathcal E_\lambda(\rho_{a|x}),
\end{align}
where $p(a'|a,x',\lambda)$ encodes the classical postprocessing of Alice’s output $a$, as a function of $x'$ and shared classical randomness and $\mathcal E_\lambda[\cdot]$ is the CPTP map corresponding to Bob’s local postprocessing of his quantum system, as a function of shared classical randomness $\lambda$. We now proceed to obtain the following lemma.
\begin{lem} \label{nFO}
Under an LOSR transformation of the form~(\ref{FOinSEO}), a state assemblage with commuting SEO can be transformed into one with noncommuting SEO. 
\end{lem}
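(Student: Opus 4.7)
The plan is to establish Lemma~\ref{nFO} by exhibiting an explicit counterexample: a qubit--qubit state assemblage with pairwise-commuting SEO, together with an LOSR map of the form~(\ref{FOinSEO}) whose output has pairwise-noncommuting SEO. The conceptual driver is that the shared variable $\lambda$ in the LOSR framework may range over an arbitrary alphabet, while the free resources of the $1$SSDI theory are precisely those admitting an LHS model with $d_\lambda\le d_A$ by Theorem~\ref{1-to-1}. Allowing unbounded shared randomness, combined with $\lambda$-conditional channels on Bob's side, therefore lets one regroup classical information into a coherent structure that violates the dimensional restriction, and so manufactures SEO noncommutativity out of nothing.

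For the construction I would take $d_A = d_B = 2$ and begin with the trivial free assemblage $\sigma_{a|x} = \openone_2/4$ for all $a, x \in \{0,1\}$, for which $\rho_B = \openone_2/2$ and every SEO equals $\openone_2/2$, so the starting assemblage is unambiguously free. The LOSR map is specified by: shared randomness $\lambda \in \{0,1,2,3\}$ drawn uniformly, $p(x|x',\lambda) = \delta_{xx'}$, Bob's prepare-and-replace channel $\mathcal E_\lambda(\rho) = \mathrm{Tr}(\rho)\,|\phi_\lambda\rangle\langle\phi_\lambda|$ with $|\phi_0\rangle = |0\rangle$, $|\phi_1\rangle = |1\rangle$, $|\phi_2\rangle = |+\rangle$, $|\phi_3\rangle = |-\rangle$, and Alice's classical wiring $p(a'|a,x',\lambda) = \delta_{a',f(x',\lambda)}$ with $f(0,\lambda) = \lambda\bmod 2$ while $f(1,\lambda) = 0$ for $\lambda\in\{0,3\}$ and $f(1,\lambda) = 1$ for $\lambda\in\{1,2\}$. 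The two partitions of the $\lambda$-alphabet induced by $f(0,\cdot)$ and $f(1,\cdot)$ are deliberately incomparable and adapted to the two mutually unbiased bases $\{|0\rangle,|1\rangle\}$ and $\{|+\rangle,|-\rangle\}$.

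Verification proceeds by direct substitution into~(\ref{FOinSEO}). One finds, for example, $\sigma'_{0|0} = \tfrac{1}{4}(|0\rangle\langle 0| + |+\rangle\langle +|)$ and $\sigma'_{0|1} = \tfrac{1}{4}(|0\rangle\langle 0| + |-\rangle\langle -|)$, while the new reduced state remains $\rho'_B = \openone_2/2$, so $B'_{a|x} = 2\sigma'_{a|x}$. A short matrix calculation then gives $[B'_{0|0},B'_{0|1}] \propto |1\rangle\langle 0| - |0\rangle\langle 1| \ne 0$, establishing the claim.

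The main obstacle is not the matrix calculation but the combinatorial choice of Alice's post-processing: if $f(0,\cdot)$ and $f(1,\cdot)$ partition $\{0,1,2,3\}$ into refinement-related blocks, or if both partitions group eigenstates of the same Pauli operator together, the output SEO remain commuting. The substantive step is thus to pick the two partitions so that they separate the two mutually unbiased bases in different ways for the two settings; once that is done, the rest reduces to a mechanical commutator evaluation.
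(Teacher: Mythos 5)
Your proposal is correct, and it takes a genuinely different (and in one respect stronger) route than the paper. The paper proves Lemma~\ref{nFO} abstractly: it composes the hidden variable $\lambda$ of the input assemblage's dimensionally-restricted LHS model with the shared randomness $\lambda'$ of the LOSR map, observes that the output admits an LHS model with hidden-variable dimension $d_{\lambda''}\le d_\lambda d_{\lambda'}$, which can exceed $d_A$, and concludes that the output SEO ``can'' be noncommuting. That argument identifies the mechanism (unbounded shared randomness defeats the dimensional restriction) but stops short of exhibiting an instance where noncommutativity is actually realized. Since the lemma is an existential claim, your explicit counterexample is a complete and self-contained proof: I have checked that with $\sigma_{a|x}=\openone_2/4$, uniform $\lambda\in\{0,1,2,3\}$, the prepare-and-replace channels onto $|0\rangle,|1\rangle,|+\rangle,|-\rangle$, and your two incomparable partitions $f(0,\cdot)$, $f(1,\cdot)$, one indeed gets $\sigma'_{0|0}=\tfrac14(|0\rangle\langle0|+|+\rangle\langle+|)$, $\sigma'_{0|1}=\tfrac14(|0\rangle\langle0|+|-\rangle\langle-|)$, $\rho'_B=\openone_2/2$, and $[B'_{0|0},B'_{0|1}]=-\tfrac{i}{4}\sigma_y\neq0$. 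Your closing remark about which choices of partitions would fail is also apt. What the paper's approach buys is generality --- it shows the dimension-inflation mechanism for an arbitrary free input assemblage in any dimension --- while your approach buys rigor and concreteness for the existential statement as literally phrased; either suffices for the role the lemma plays in motivating the restriction to shared-randomness-free operations in Theorem~\ref{fo}.
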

\begin{proof}
To prove the statement, we demonstrate that under such an LOSR transformation given by Eq.~(\ref{FOinSEO}), a state assemblage, with an LHS model having $d_\lambda \le d_A$, can be transformed into another state assemblage, with an LHS model having a higher hidden variable dimension. To see this, let us apply a LOSR transformation of the form~(\ref{FOinSEO}) with $\lambda$ denoted as $\lambda'$ to an assemblage having an LHS model with  $d_\lambda \le d_A$  as follows:
\begin{multline}
\sigma'_{a'|x'} = \sum_{\lambda',\lambda,a,x} p(\lambda) p(\lambda') p(a'|a,x',\lambda') p(x|x',\lambda') \\
\cdot p(a|x,\lambda) \mathcal E_{\lambda'}(\rho_\lambda).
\end{multline}
We rewrite the above decomposition of $\{\sigma'_{a'|x'}\}_{a',x'}$ as follows:
\begin{align}
\sigma'_{a'|x'} &= \sum_{\lambda''} p(\lambda'') p(a'|x',\lambda'') \rho_{\lambda''},
\end{align}
where we have defined $p(\lambda'')= p(\lambda) p(\lambda')$,  $p(a'|x',\lambda'')=p(a'|a,x',\lambda') p(x|x',\lambda') p(a|x,\lambda)$ and $\rho_{\lambda''}=\mathcal E_{\lambda'}(\rho_\lambda)$. Note that the above decomposition is again an LHS model, but with a hidden variable dimension $d_{\lambda''} \le d_\lambda d_{\lambda'}$.  Thus, the new assemblage can have an LHS model with $d_{\lambda''} > d_A$, implying that its SEO can have noncommutativity.
\end{proof}

 Next, we proceed to prove Theorem~\ref{fo}. 
 From Lemma \ref{nFO}, it follows that the shared randomness $\lambda$ in the LOSR (\ref{FOinSEO}) can increase the dimension $d_\lambda$ of the given LHS assemblage. Therefore,  to define the free operations of SDI steering, we first uncorrelate $\lambda$ in Eq.~(\ref{FOinSEO}) by two independent sources of randomness $\mu$ and $\nu$, giving rise to transformations of the form given by
\begin{multline} 
\sigma'_{a'|x'}=\sum_{\mu,\nu,a,x} p(\mu)p(\nu) p(a'|a,x',\mu) p(x|x',\mu) \\
\cdot p(a|x) \mathcal E_\nu(\rho_{a|x}).
\end{multline}
Next, by defining the CPTP map $ \mathcal E=\sum_\nu p(\nu)  \mathcal  E_\nu$ and using $\sigma_{a|x}=p(a|x)\rho_{a|x}$, the above transformation is equivalent to the transformation of the form given by \begin{equation}\label{FOcSEOa}
\sigma'_{a'|x'}=\sum_{a,x,\mu} p(\mu) p(a'|a,x',\mu) p(x|x',\mu)
 \mathcal E(\sigma_{a|x}),
\end{equation}
as stated in Theorem~\ref{fo}.

We now demonstrate that operations of the form given by Eq.~(\ref{FOcSEOa}) cannot transform a state assemblage with commuting SEO into one with noncommuting SEO. Note that under any operation of the form~(\ref{FOcSEOa}), the assemblage having an LHS model with  $d_\lambda \le d_A$ transforms as follows:  \begin{align}
\sigma'_{a'|x'}=  &\sum_{a,x,\mu} p(\mu) p(a'|a,x',\mu) p(x|x',\mu)  \nonumber \\
&\cdot \mathcal E \left( \sum^{d_\lambda-1}_{\lambda=0} p(\lambda) p(a|x,\lambda) \rho_\lambda \right)    \\
= &\sum_{a,x,\mu} p(\mu) p(a'|a,x',\mu) p(x|x',\mu) \nonumber \\
&\cdot \sum^{d_\lambda-1}_{\lambda=0} p(\lambda) p(a|x,\lambda) \rho'_\lambda    \\
&=  \sum^{d_\lambda-1}_{\lambda=0} p(\lambda)  p(a'|x',\lambda) \mathcal \rho'_\lambda, \label{LTLHS}
\end{align}
 where in the second equality we have defined $\rho'_\lambda:=\mathcal E(\rho_\lambda)$ and in the last equality we have defined  $$p(a'|x',\lambda):=\sum_{a,x,\mu} p(\mu) p(a'|a,x',\mu) p(x|x',\mu)  p(a|x,\lambda).$$ Note that Eq. (\ref{LTLHS}) implies  another dimensionally-restricted LHS model for $\{\sigma'_{a'|x'}\}_{a',x'}$ with $d_\lambda \le d_A$. Therefore, under any local operation of the form (\ref{FOcSEOa}), any commuting SEO cannot be transformed into a noncommuting SEO.

\section{Proof of monotonicity of the measure in Definition~\ref{QSDIS}} \label{monotone}
First, we demonstrate the monotonicity of the measure of noncommutativity of any given SEO under free operations $\mathcal{F}$.
\begin{widetext}
\begin{align}
 &\Upsilon_{p}(\mathcal{F}[B_{a|x}]) \nonumber \\
 &= \sum_{a,a',a''.a''',x,x',\mu}||\left[ p(\mu) p(a''|a,x',\mu) p(x|x',\mu)
p(a|x) \mathcal E^\dag[B_{a|x}], \quad
p(\mu) p(a'''|a,x'',\mu) p(x|x'',\mu)
p(a|x) \mathcal E^\dag[B_{a'|x'}]\right]||_p \\
&\le \sum_{a,a'}||[\mathcal E^\dag[B_{a|0}],\mathcal E^\dag[B_{a'|1}]]||_p \\
&\le  \sum_{a,a'}||[B_{a|0},B_{a'|1}]||_p
=\Upsilon_{p}(B_{a|x}),
\end{align}
\end{widetext}
where in the second line we have used the triangle inequality for the Schatten norms $\sum_{a,b,i}||[p_i E_a,F_b]||_p \le \sum_i p_i \sum_{a,b} ||[E_a,F_b]||_p=\sum_{a,b} ||[E_a,F_b]||_p,$ for any two measurements and probability distribution $\{p_i\}_i$, with $p_i \ge 0$ and $\sum_i p_i=1$. In the third line, we have used that under CPTP maps $\mathcal E$:  $\mathbb{C}^{d} \rightarrow \mathbb{C}^{d}$ any free resource in the 1SSDI framework cannot be transformed into one with SDI steerability, in other words, noncommuting SEO cannot be created from commuting ones. The monotonicity of the measure of noncommutativity under $\mathcal{F}$ as shown above in turn implies that the measure of SDI steerability given by Definition~\ref{QSDIS} also has the same property.

\section{Illustration of the quantification SDI steerability}\label{illu}

As an illustration of the quantification of SDI steerability in Definition~\ref{QSDIS}, we apply it to specific two-qudit states.
First, we consider any pure entangled state $\ket{\phi^{(k)}_+}$ with the given Schmidt number $k$.
Suppose that Alice uses a measurement assemblage in two mutually unbiased bases (MUBs) given by  $M_{a|0}=\ketbra{a}{a}$, and $M_{a|1}=F\ketbra{a}{a}F^\dag$, where $F$ is the  discrete Fourier transform of $d$-dimension. Then, using Eq. ($8$) in the main text, the SEO is given by 
\begin{align}
\begin{split}
B_{a|0}&=\ketbra{a}{a}  \\
B_{a|1}&=F\ketbra{a}{a}F^\dag,
\end{split}
\end{align}
which implies that 
\begin{equation}\label{QSck}
S_{\Upsilon_p}(\sigma_{a|x}^{\ket{\phi^{(k)}_+}})=1,
\end{equation}
where $\{\sigma_{a|x}^{\ket{\phi^{(k)}_+}}\}$ is the state assemblage arising from $\ket{\phi^{(k)}_+}$ for the two MUBs.
Thus, all pure entangled states attain the same maximal value of the measure $S_{\Upsilon_p}(\sigma_{a|x})$. This is due to the concavity property of the noncommutativity measure.

Next, we consider the two-qudit isotropic state given by
\begin{equation}\label{iso}
\rho^{(d)}_{\rm iso}= \alpha \ketbra{\phi^{(d)}_+}{\phi^{(d)}_+} +\frac{1-\alpha}{d^2} \openone_{d^2}, \quad -\frac{1}{d^2-1} \le \alpha \le 1,
\end{equation}
which is the maximally entangled state $\ket{\phi^d_+}$ subjected to white noise. These states are separable for $ -\frac{1}{d^2-1} \le \alpha \le \frac{1}{d+1} $ and entangled otherwise~\cite{HH99}. For the two MUBs used in the case of pure entangled states above, the state assemblage arising from the two-qudit isotropic state satisfies 
\begin{equation}\label{SDISiso}
S_{\Upsilon_p}( \sigma_{a|x}^ {\rho^{(d)}_{\rm iso}})=|\alpha|,
\end{equation}
which is nonzero for any $|\alpha|>0$. 
Thus, even when the state is unsteerable in the standard steering context or separable, SDI steering is demonstrated for any $|\alpha|>0$.

Finally, we calculate $S_{\Upsilon_p}$ of $\rho^{(d)}_{\rm iso}$ using the same measurement as above, but with detection efficiency $\eta$. That is, instead of ideal measurements with elements $M_{a|x}$, we consider inefficient measurements $M^{(\eta)}_{a|x}$, with one additional outcome $a=\emptyset$,
given by
\begin{equation}
M^{(\eta)}_{a|x} = \Bigg\{\begin{array}{cc}
\eta M_{a|x}, & \qquad a \neq \emptyset \\
(1-\eta) \openone, & \qquad a = \emptyset.
\end{array}
\end{equation}
Now, using the two MUBs for $M_{a|x}$, $S^{(\eta)}_{\Upsilon_p}(\rho^{(d)}_{\rm iso})$ of $\rho^{(d)}_{\rm iso}$ in the presence of inefficient measurements is given by
\begin{equation}\label{SDISisoie}
S_{\Upsilon_p}( \sigma_{a|x}^ {\rho^{(d)}_{\rm iso}})=\eta|\alpha|,
\end{equation}
which implies that for any low detection efficiency $\eta$, $S_{\Upsilon_p}( \sigma_{a|x}^ {\rho^{(d)}_{\rm iso}})$ is nonzero for any $|\alpha|>0$. Whereas, for the steering demonstration in the 1SDI context, the detection efficiency must be greater than a certain threshold. For example, in the case of the maximally entangled two-qubit state, a genuine demonstration of steering requires that the detection efficiency be greater than $50\%$~\cite{PCS+15}, which may be too demanding for  experimental implementation~\cite{SVM+22}.

\section{Proof of Theorem~\ref{IRandom}}\label{app:thm3}
Consider that $\{p(a|x)\}_{a,x}$ is produced using a single quantum state $\rho_A$
and a measurement assemblage $\mathcal{M}^{n_a}_{n_x}$. In this case, the incoherent state $\rho_A=\sum_i p_i \ketbra{i}{i}$ in the basis, $\{\ket{i}\}$, and the uninformative measurement assemblage with the elements given by $M_{a|x}=p(a|x) \openone$ can produce randomness in any $\{p(a|x)\}$. Thus, global measurement on any single quantum state cannot be used to obtain intrinsic randomness. Next,  consider that $\{p(a|x)\}_{a,x}$ is observed using local measurements in the given 1SSDI scenario. Suppose that the given SEO has commutativity. Then  $\{p(a|x)\}_{a,x}$ can be produced using a classical-quantum (CQ) state $\rho_{\rm{CQ}}= \sum_i p_i \ketbra{i}{i} \otimes \rho^{(i)}_B $, where $\{\ket{i}\}$ forms an orthonormal basis and $p_i$'s are probabilities, for the measurement assemblage that produced the given state assemblage with commuting SEO. Thus, observing $\{p(a|x)\}_{a,x}$ in the presence of SEO that has commutativity does not imply intrinsic randomness, because it can be produced using an incoherent-quantum state, as mentioned above.

On the other hand, suppose $\{p(a|x)\}_{a,x}$ is observed in the given 1SSDI scenario with the SEO having noncommutativity. 
This implies that in the given 1SSDI scenario, the intrinsic randomness of $\{p(a|x)\}_{a,x}$ is certified by the noncommuting SEO because $\{p(a|x)\}_{a,x}$ cannot be produced using an incoherent-quantum state. In this context, suppose that Eve can hold a purification, $\ket{\psi_{ABE}}$, without any restriction on her dimension. Then, any correlation in the state assemblage $\mathcal{S}^{n_a}_{n_x}$ can be shared with Eve if entanglement is not certified~\cite{AG05,PCS+15,GBS16}. This follows because the state assemblage can be reproduced using a separable state, and the correlation in any separable state can be shared with Eve by the purification. Thus, Eve's dimension has to be restricted to that of Alice to certify randomness in $\{p(a|x)\}$ produced by any state assemblage with the SEO having noncommutativity.

\section{Proof of the inequality of quantum random number generation protocol} \label{proofSTQRNG}
Here, we provide the proof of the inequality given by Eq.~(\ref{guessub}) in the main text for the protocol of SDI quantum random number generation based on the witness of SDI steerability given by Definition~\ref{QSDIS}. The proof is inspired by that of the inequality used in~\cite{LBL+15} for the protocol of self-testing random number generation based on the prepare-and-measure setup.

First, considering the guessing probability $p^{(\mu)}_{g}$ for the given $\mu$, we have
\begin{align} \label{st1}
\begin{split}
p^{(\mu)}_{g}&= \max_a p(a|x,\mu), \\ 
             &\le \frac{1+\cos(\theta_\mu)}{2},  
\end{split}             
\end{align}
where $\theta_\mu$  is the angle between Alice's two measurements described by the Bloch vectors $\vec{T}^{(\mu)}_{0,1}$. 
The reasoning behind the above bound is as follows.  Let $p(a|x,\mu)$ be produced from a  state $\rho$ as
\begin{equation}
p(a|x,\mu)=\Tr (M^{(\mu)}_{a|x} \rho).
\end{equation}
 To obtain the best guessing probability $p^{(\mu)}_{g}$, optimizing $ p(a|x,\mu)$, given above for one of the outcomes, over all possible states $\rho$,  $p(a|x,\mu)=\cos^2(\theta_\mu/2)$, which is the upper bound in Eq. (\ref{st1}). This optimal value is achieved by a pure
 state that lies on the Bloch vector of the other measurement $\vec{T}^{(\mu)}_{x'}$.

Next, for fixed randomness $\mu$, consider the value of the witness denoted by  $S^{({\mu})}_\Upsilon$. To provide an upper bound on  $S^{({\mu})}_\Upsilon$, we make the following two observations on $S_\Upsilon$ of two-qubit states.
Let $\vec{r}$ and $\vec{v}$ denote the Bloch vectors of two qubit POVMs $B_{a|0}$ and $B_{a|1}$, respectively. Using this Bloch representation for the two POVMs,  we have $\sum_{a,a'}||[B_{a|0},B_{a'|1}]||_p=2^{\frac{1}{p}+1}||\vec{r}|||\vec{v}||\sin(\vec{r},\vec{v})$, which implies that
\begin{equation} \label{SBv}
S_{\Upsilon_p}=||\vec{r}|||\vec{v}||\sin(\vec{r},\vec{v}). 
\end{equation}
In Ref.~\cite{CS16}, it is known that an incompatibility quantifier of $M_{a|x}$ upper bounds a corresponding steering quantifier of any state assemblage that can be produced by Alice performing those measurements. On the other hand, we have the following  analogous relationship between the quantifier of the noncommutativity of $M_{a|x}$ and the noncommuativity measure of SDI steering,
\begin{equation}\label{SubMN}
S_{\Upsilon_p} \le \frac{1}{2^{\frac{1}{p}+1}} \Upsilon_p(M_{a|x})=\frac{1}{2^{\frac{1}{p}+1}} \sum_{a,a'}||[M_{a|0}, M_{a'|1}]||_p.
\end{equation}
Using Eqs.~(\ref{SBv}) and (\ref{SubMN}), we now have obtained an upper bound on $S^{({\mu})}_{\Upsilon_p}$ as follows:
\begin{equation} \label{st2}
S^{({\mu})}_{\Upsilon_p} \le  \sin\theta_\mu.
\end{equation}
Combining Eqs. (\ref{st1}) and (\ref{st2}), we get 
\begin{align} \label{st3}
p^{(\mu)}_{g}&= \frac{1}{2} \left(1+\sqrt{1-(S^{({\mu})}_{\Upsilon_p})^2}\right) =f(S^{({\mu})}_{\Upsilon_p}).
\end{align}

Now, we note the following convexity property of the witness $S_{\Upsilon_p}$,
\begin{equation} \label{st4}
S_{\Upsilon_p} \le \sum_\mu q_\mu S^{({\mu})}_{\Upsilon_p}.
\end{equation}
To prove this property, we rewrite the witness with internal randomness $\mu$ as follows:
\begin{align}
S_{\Upsilon_p}&=\frac{1}{2^{\frac{1}{p}+1}}\sum_{a,a',\mu}||q_\mu [B^{(\mu)}_{a|0},B^{(\mu)}_{a'|1}]||_p\\
 &\le \frac{1}{2^{\frac{1}{p}+1}}\sum_{a,a',\mu}q_\mu || [B^{(\mu)}_{a|0},B^{(\mu)}_{a'|1}]||_p\\
 &= \sum_\mu q_\mu S^{({\mu})}_{\Upsilon_p},
\end{align}
where in the second line  we have used the triangle inequality for the Schatten norms.

We can now proceed to obtain the upper bound on the guessing probability.  
Using the definition of $p_{g}$ together with Eq.~(\ref{st3}) and Eq.~(\ref{st4}), we have
\begin{align}
p_{g}&=\sum_\mu q_\mu p_{g}^{(\mu)}  \\
         &\le \sum_\mu q_\mu f(S^{({\mu})}_{\Upsilon_p}) \\
         &\le f(\sum_\mu q_\mu S^{({\mu})}_{\Upsilon_p}) \\
         &\le f(S_{\Upsilon_p}),
\end{align}
where in the third line we have used Jensen's inequality
and the concavity of $f$, and in the last line we have used that
$f$ is a decreasing function. Hence, finally, we get the desired inequality.

\bibliography{coherence}

\end{document}